\definecolor{mygreen}{rgb}{0,0.6,0}
\definecolor{mygray}{rgb}{0.5,0.5,0.5}
\definecolor{mymauve}{rgb}{0.58,0,0.82}
\tiny\color{mygray}, 
\newtheorem{theorem}{Theorem}
\newtheorem{proposition}{Proposition}
\newtheorem{definition}{Definition}
\newtheorem{example}{Example}
\newtheorem{remark}{Remark}
\newtheorem{fact}{Fact}
\newtheorem{proof}{Proof}
\newcommand{\B}{{\mathcal B}}
\newcommand{\Fc}{\mathcal{F}_C}
\newcommand{\FF}{\mathbb{F}}
\newcommand{\KK}{\mathbb{K}}
\newcommand{\QQ}{\mathbb{Q}}
\newcommand{\ZZ}{\mathbb{Z}}
\newcommand{\dd}{\mathrm{d}}
\newcommand{\NNF}{\mathrm{NNF}}
\newcommand{\w}{\mathrm{w}}
\newcommand{\wP}{{\mathfrak w}}
\newcommand{\dP}{{\mathfrak d}}
\newcommand{\op}{{\sf p}} 
\newcommand{\Bf}{B.f.$\;$}
\newcommand{\fanf}{f^{(\FF)}}
\newcommand{\fnnf}{f^{(\ZZ)}}
\begin{document}
%
\title{A deterministic algorithm for the distance and weight distribution of binary nonlinear codes}
%
%
%

\author{Emanuele~Bellini,
        and~Massimiliano~Sala. 
\thanks{E. Bellini, Telsy S.p.A., Italy.}
\thanks{e-mail: eemanuele.bellini@gmail.com.}
\thanks{M. Sala, University of Trento, Italy.}
\thanks{email: maxsalacodes@gmail.com}
}
%
%

\markboth{Journal of \LaTeX\ Class Files,~Vol.~??, No.~?, Month~YEAR}%
{Shell \MakeLowercase{\textit{et al.}}: Bare Demo of IEEEtran.cls for Journals}
%



\maketitle

\begin{abstract}
Given a binary nonlinear code, we provide a deterministic algorithm to compute its weight and distance distribution, 
and in particular its minimum weight and its minimum distance,
which takes advantage of fast Fourier techniques.
This algorithm's performance is similar to that of best-known algorithms for the average case, 
while it is especially efficient for codes with low information rate. 
We provide complexity estimates for several cases of interest.
\end{abstract}

\begin{IEEEkeywords}
Distance distribution, minimum distance, weight distribution, minimum weight, non-linear code
\end{IEEEkeywords}
%
%
\IEEEpeerreviewmaketitle
%
%
\section{Introduction}
 \label{sec:intro}
Let $C$ be a nonlinear code, that is, a code which is not necessarily linear.
There are some related computational problems which are of interest, that we list as
the computation of: the distance distribution (A), the minimum distance (A1), 
a minimum-distance codeword-pair (A2), the weight distribution (B), the minimum weight (B1),
a minimum-weight codeword (B2).
The decoding performance of $C$ can be established by solving Problem A and can be estimated by 
solving Problem A1. 
\begin{remark} \label{rem}
Solving Problem A2 (respectively, B2) implies solving Problem A1 (B1), but the converse does not hold. However, it is noteworthy that no known algorithm is able to solve
A1 (B1) without solving A2 (B2). 
\end{remark}
If $C$ is linear, Problem A (respectively, A1, A2) and B (B1, B2) are equivalent. This holds also for some nonlinear codes, called distance-invariant codes 
\cite{CGC-cd-art-mitchell1989distance}, 
and many of these are optimal codes (e.g. the Preparata-Kerdock codes \cite{CGC-cd-art-preparata}).
When $C$ is linear, we consider also the decoding problem, which is implied
by solving Problem B2 in the suitable code coset (which is a nonlinear code). 
Observe that the considerations in Remark \ref{rem} remain valid also if we restrict to linear codes.\\
\indent
In the linear case it is convenient to use probabilistic algorithms for the computation of the minimum distance,
such as the Brouwer-Zimmerman algorithm \cite{CGC-cd-book-zimmermann1996}, or any of its variations, e.g. \cite{CGC-cd-art-cant98}.\\
We note that these algorithms must actually retrieve (at least) one minimum-weight codeword in order to obtain the minimum-weight value.\\
%
%
In the nonlinear case the minimum weight and the minimum distance may be different.
For some classes of nonlinear codes there are algorithms which perform much better than brute force, e.g. code with large kernel (\cite{CGC-cd-art-pujol2012minimum,CGC-cd-art-villanueva2014efficient}) or additive codes (\cite{CGC-cd-art-white2006new}). 
However, in the general nonlinear case it is not possible to improve significantly on the brute-force approach, as shown in \cite{CGC-cd-art-elemanumax}. 
Indeed, we are not aware of any non-exponential probabilistic or deterministic algorithm to solve any of the problems A, A1, A2, B, B1, B2.
In particular, to compute the weight distribution of a generic binary $(n,2^k)$-nonlinear code \emph{given} as a list of binary vectors,
we need to perform $O(n2^k)$ bit operations,
while finding the distance distribution requires $O(n2^{2k})$ bit operations. \\
The main result of this paper is a \emph{deterministic} algorithm to compute the distance and weight distribution, and thus the minimum distance and the minimum weight, 
of any random binary code \emph{represented} as a set of Boolean functions in numerical normal form (NNF). 
Our method performs better than brute force for those codes with low information rate and sparse NNF representation, 
while in the general case, it achieves the same asymptotic computational complexity as brute force methods.\\
In Section \ref{sec:prel}, after some preliminaries on Boolean functions, 
we argue that representing a code as a set of Boolean functions in NNF
does not have any particular drawback with respect to the classical representation of a code as a set of binary vectors. 
In Section \ref{sec:WordsOfWeight_t}, to each binary code we associate a polynomial whose evaluations are the weights of the code.
Similarly, in Section \ref{sec:PairsOfDistance_t}, to each binary code we associate a polynomial whose evaluations are the distances of all possible pairs of codewords.
Given these two polynomials we are able to compute the weight and the distance distribution of any binary nonlinear code.
Finally, in Section \ref{sec:WeightComplexity} we provide some complexity considerations regarding our algorithms.
In particular, we show that, 
to compute the weight distribution starting from the NNF representation of a binary nonlinear code has a complexity of $O((n/h+k)2^k)$, where $n/h$ is the average number of nonzero monomials of the Boolean functions representing the code. 
Moreover, there are many important cases where our approach is provably faster than brute-force (e.g. in the linear case and in the nonlinear case when the NNF representation of the code is sparse), and cases where it is experimentally faster than the Brouwer-Zimmerman method.
\section{Preliminaries}
 \label{sec:prel}
\subsection{Representations of Boolean functions}
%
In this section we briefly summarize some definitions and known results from \cite{CGC-cd-book-carlet} and \cite{CGC-cd-book-macwilliamsI}, concerning representations of Boolean functions.\\
We denote by $\FF$ the field $\FF_2$. The set $\FF^n$ is the set of all binary vectors of length $n$, viewed as an $\FF$-vector space.\\
A \emph{Boolean function} (\Bf) is a function $f:\FF^n\rightarrow \FF$. The set of all Boolean functions from $\FF^n$ to $\FF$ will be denoted by $\B_n$.
%
There are several ways one can uniquely represent a \Bf. We briefly outline those we need.
\subsubsection{Evaluation vector}
We assume to have ordered $\FF^n$, so that $\FF^n=\{\op_1,\ldots,\op_{2^n}\}$. 
A Boolean function $f$ can be specified by a \emph{truth table}, which gives the evaluation of $f$ at all $\op_i$'s.
We consider the evaluation map:
$$
\B_n \longrightarrow \FF^{2^n} 
\qquad
f \longmapsto \underline{f}=(f(\op_1),\ldots,f(\op_{2^n}))\,.
$$
The vector $\underline{f}$ is called the \emph{evaluation vector} of $f$.
Once the order on $\FF^n$ is chosen, i.e. the $\op_i$'s are fixed, it is clear that the evaluation vector of $f$ identifies $f$. 
\subsubsection{Algebraic normal form}\label{secANF}
A Boolean function $f\in\B_n$ can be expressed in a unique way as a square-free polynomial in $\FF[X]=\FF[x_1,\ldots,x_n]$, i.e.
$$f=\sum_{v \in \FF^n}b_vX^v\,,$$
where $X^v=x^{v_1}\cdots x^{v_n}$.\\
This representation is called the \emph{Algebraic Normal Form} (ANF).\\
%
There exists a simple divide-and-conquer butterfly algorithm (\cite{CGC-cd-book-carlet}, p.10) to compute the ANF from the truth-table (or vice-versa) of a Boolean function, which requires $O(n2^{n})$ bit sums (with big $O$ constant $1/2$), while $O(2^n)$ bits must be stored. This algorithm is known as the \emph{fast M\"obius transform}.
\subsubsection{Numerical normal form}\label{secNNF}
%
In \cite{CGC-cry-art-carlet1999} (see also \cite{CGC-cry-art-carlet2001bent}, \cite{CGC-cry-carlet2002coset}) the following representation of Boolean functions has been introduced.\\
Let $f$ be a function on $\FF^n$ taking values in a field $\KK$. We call the \emph{numerical normal form (NNF)} of $f$ the following expression of $f$ as a polynomial:
$$
f(x_1,\ldots,x_n) = \sum_{u \in \FF^n}\lambda_u (\prod_{i=1}^{n}x_i^{u_i}) = \sum_{u \in \FF^n}\lambda_{u}X^u\,,
$$
with $\lambda_{u} \in \KK$ and $u=(u_1,\ldots,u_n)$.\\
It can be proved 
(\cite{CGC-cry-art-carlet1999}, Proposition 1)
that any Boolean function $f$ admits a unique numerical normal form.
As for the ANF, it is possible to compute the NNF of a Boolean function from its truth table by mean of an algorithm similar to a fast Fourier transform, thus requiring $O(n2^n)$ additions over $\KK$ and storing $O(2^n)$ elements of $\KK$.\\
\indent
From now on let $\KK = \QQ$.\\
The truth table of $f$ can be recovered from its NNF by the formula $$f(u)=\sum_{a\preceq u}\lambda_a,\forall u \in \FF^n\,,$$
where $a\preceq u\iff \forall i \in \{1,\ldots,n\} \; a_i \le u_i$. Conversely, 
as shown in \cite{CGC-cry-art-carlet1999} (Section 3.1),
it is possible to derive an explicit formula for the coefficients of the NNF by means of the truth table of $f$.
\begin{proposition}\label{propNNFcoeffIntro}
 Let $f$ be any integer-valued function on $\FF^n$. For every $u\in \FF^n$, the coefficient $\lambda_u$ of the monomial $X^u$ in the NNF of $f$ is:
 \begin{equation}\label{eqNNFCoeff}
  \lambda_u = (-1)^{\w(u)}\sum_{a\in \FF^n |a\preceq u}(-1)^{\w(a)}f(a)\,.
 \end{equation}
\end{proposition}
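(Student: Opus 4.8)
The plan is to invert the reconstruction formula $f(u)=\sum_{a\preceq u}\lambda_a$ stated just above the proposition, exploiting that $(\FF^n,\preceq)$ is the Boolean lattice on $n$ elements and that the NNF of a Boolean (more generally, integer-valued) function is \emph{unique} (\cite{CGC-cry-art-carlet1999}, Proposition~1, recalled above). Concretely, I would take the right-hand side of \eqref{eqNNFCoeff} as a \emph{candidate} family $\{\mu_u\}_{u\in\FF^n}$ of rationals, namely $\mu_u := (-1)^{\w(u)}\sum_{a\preceq u}(-1)^{\w(a)}f(a)$, and verify that $\sum_{a\preceq u}\mu_a = f(u)$ for every $u\in\FF^n$; uniqueness of the NNF then forces $\mu_u=\lambda_u$, which is exactly the claim.

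First step: substitute the definition of $\mu_a$ into $\sum_{a\preceq u}\mu_a$ and interchange the two finite summations, obtaining $\sum_{a\preceq u}\mu_a = \sum_{b\preceq u}(-1)^{\w(b)}f(b)\Big(\sum_{b\preceq a\preceq u}(-1)^{\w(a)}\Big)$, where the inner sum runs over the interval $[b,u]$ of the poset. Second step: evaluate that inner sum. Writing each $a$ in $[b,u]$ uniquely as $a=b\vee c$ with $c$ ranging over the vectors $c\preceq u$ whose support is disjoint from that of $b$, one gets $\sum_{b\preceq a\preceq u}(-1)^{\w(a)}=(-1)^{\w(b)}\sum_{c}(-1)^{\w(c)}=(-1)^{\w(b)}(1-1)^{\w(u)-\w(b)}$, which vanishes unless $b=u$ and equals $(-1)^{\w(u)}$ when $b=u$. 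Hence the double sum collapses to the single term $b=u$, giving $\sum_{a\preceq u}\mu_a=(-1)^{\w(u)}f(u)\,(-1)^{\w(u)}=f(u)$, as required.

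I expect the only mildly delicate point to be the bookkeeping in the second step: correctly parametrising the order interval $[b,u]$ and recognising the alternating sum over a nonempty Boolean cube as $(1-1)^m=0$. Everything else — the interchange of the two finite sums and the final appeal to uniqueness — is routine. An essentially equivalent route would be to quote M\"obius inversion on the Boolean lattice directly, using that its M\"obius function is $\mu(a,u)=(-1)^{\w(u)-\w(a)}$ for $a\preceq u$; but since the reconstruction formula is already available and the paper does not set up poset machinery, the self-contained cancellation argument seems preferable. One should also observe that the hypothesis is merely that $f$ is integer-valued (indeed the computation lives in $\QQ$, as permitted by the standing assumption $\KK=\QQ$): neither Boolean-valuedness nor any positivity of $f$ is used anywhere.
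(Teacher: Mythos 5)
Your proof is correct and is the standard M\"obius-inversion argument; the paper itself gives no proof of this proposition, deferring to Section~3.1 of the cited Carlet (1999) paper, where essentially the same inversion of the reconstruction formula $f(u)=\sum_{a\preceq u}\lambda_a$ is carried out, so your self-contained cancellation argument (interchange the sums, collapse the inner alternating sum over the interval $[b,u]$ to the case $b=u$) matches the intended derivation. The only detail worth flagging is that the uniqueness of the NNF you invoke is stated in the paper only for Boolean functions, whereas here $f$ is merely integer-valued; this extension is immediate because the coefficient-to-evaluation map is unitriangular over $\ZZ$ and hence bijective --- a fact your computation in effect re-proves by exhibiting the explicit inverse.
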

It is possible to convert a Boolean function from NNF to ANF simply by reducing its coefficients modulo 2. 
The inverse process is less trivial. One can either apply Proposition \ref{propNNFcoeffIntro} to the evaluation vector of $f$ or apply recursively the fact that 
\begin{align}\label{eq:ANFtoNNF}
 a +_{\FF} b = a +_{\ZZ} b +_{\ZZ} (-2ab)\,,
\end{align}
and the fact that each variable has to be square-free (we are working in the affine algebra $\KK[x_1,\cdots,x_n]/\langle x_1^2-x_1,\cdots,x_n^2-x_n \rangle$).
%
%
\subsection{Representing a code as a set of Boolean functions}
  \label{sec:CodeRepresentation}
We consider binary codes, i.e. codes over the finite field $\FF$ of length $n$, with $M$ codewords. 
A binary code $C$ with such parameters is denoted as a $(n,M)$-code. If the code is a subspace of dimension $k$ of $(\FF)^n$ then it is called linear and we indicate it as a $[n,k]$-linear code.\\
Now we show that any binary $(n,2^k)$-code $C$ with $2^k$ codewords can be represented in a unique way as a set of $n$ Boolean functions $f_1, \ldots, f_n : (\FF)^k \to \FF$. 
We indicate with $\fanf$ a Boolean function represented in algebraic normal form, and with $\fnnf$ a Boolean function represented in numerical normal form.
\begin{definition}
 Given a binary $(n,2^k)$-code $C$, consider a fixed order of the codewords of $C$ and of the vectors of $(\FF)^k$. Then consider the matrix $M$ whose rows are the codewords of $C$. We call the \emph{defining polynomials} of the code $C$ the set $\Fc = \{f_1,\ldots,f_n\}$ of the uniquely determined Boolean functions whose truth table are the columns of $M$. We also indicate with $F=(f_1,\ldots,f_n) \in \FF[X]^n$, where $X=x_1,\ldots,x_k$, the polynomial vector whose components are the defining polynomials of $C$.
 With abuse of notation, we sometimes write
 \begin{align*}
  \Fc = \{\fanf_1,\ldots,\fanf_n\} \text{ or } \Fc = \{\fnnf_1,\ldots,\fnnf_n\}
 \end{align*}
\end{definition}
Notice that $F$ can be seen as an encoding function, since $F:(\FF)^k \to (\FF)^n$.
%
%
\subsubsection{Memory cost of representing a code}
 Let us call \emph{vectorial} the representation of a code as a list of vectors over $\FF$, and \emph{Boolean} the representation of the same code as a list of Boolean functions.\\
 For a random code, in terms of memory cost, the two representations are equivalent. 
 In the vectorial representation we need to store all the components of each codeword, which are $n$ times $2^k$ codewords. In the Boolean representation we need to store the $2^k$ coefficients of the $n$ defining polynomials. In both cases we need a memory space of order $O(n2^k)$.\\
 If the code $C$ is linear it can be represented with a binary generator matrix of size $k \times n$. In this case the defining polynomials are linear Boolean functions, i.e. any is of the form
 $
 \sum_{i=1}^{k} \lambda_i x_i, \lambda_i \in \FF
 \,,$ 
 which means that to represent them it is sufficient to store $kn$ elements of $\FF$, yielding again an equivalent representation.\\
 As shown in \cite{CGC-cd-art-pujol2012minimum,CGC-cd-art-villanueva2014efficient}, if $C$ is a binary code of length $n$ with kernel $K$ of dimension $k_K$ and $t$ coset leaders given by the set $S=\{c_1,\ldots,c_t\}$, we can represent it as the kernel $K$ plus the coset leaders $S$. Since the kernel needs a memory space of order $O(nk_K)$, then the kernel plus the $t$ coset leaders takes up a memory space of order $O(n(k_K+t))$.
 When $C$ is linear then $C=\ker(C)$, so the generator matrix is used to represent $C$. On the other hand, when $t+1 = |C|$, then representing the code as the kernel plus the coset leaders requires a memory of $O(n|C|) = O(n2^k)$ 
 (since we are supposing the code has $2^k$ codewords). 
 In the latter case, a Boolean representation could be more convenient.
%
 Another situation in which a Boolean representation is more convenient is the case where the dimension $k$ of the code is much less than the length $n$, i.e. when certain components have to be repeated. 
\\
\indent
It is worth noticing that a linear structure of a nonlinear binary code can be found over a different ring.  For example there are binary codes which have a $\ZZ_4$-linear or $\ZZ_2\ZZ_4$-linear structure and, therefore, they can also be compactly represented using quaternary generator matrix, as shown in \cite{CGC-cd-art-hammons} and \cite{CGC-cd-art-Borg2010}.\\
It can be shown that representing a code with ``practical'' parameters and using NNF \Bf's is as convenient as the usual representation of the code. 
%
%
\subsubsection{Number of coefficients of the NNF}
  \label{sec:NumberOfCoeffNNF}
In order to prove that representing a code with practical parameters and using NNF \Bf's is as convenient as the usual representation of the code, in this section we want to study the distribution of the number of nonzero coefficients of a \Bf represented in NNF, i.e., once the number of variables $k$ is fixed we want to know how many \Bf's have only one nonzero coefficient, how many have two, and so on. \\
We are also interested in finding a relation between this distribution and the distribution of the number of nonzero coefficients of a \Bf represented in ANF.\\
In Table \ref{tabNonzeroCoeffDistr} we report the distribution of the nonzero coefficients of \Bf's represented in ANF and NNF with $k=1,2,3,4$ variables. As one may expect, the ANF follows a binomial distribution. This means that choosing a random \Bf its ANF is likely to have half of the coefficients equal to $0$ and half equal to $1$. This does not happen for the NNF, although for $k$ small the two distributions are close. This means that, when $k$ is small, a random binary $(n,2^k)$-nonlinear code can be represented with a set of \Bf's in NNF with half of the coefficients equal to $0$ with high probability, while sparse NNF representations are more rare as $k$ grows.
\begin{table*}[ht]
\begin{center}
\scalebox{0.7}{
\begin{tabular}{l|l llll llll llll llll }
k & 0 & 1 & 2 & 3 & 4 & 5 & 6 & 7 & 8 & 9 & 10 & 11 & 12 & 13 & 14 & 15 & 16 \\ 
\hline
A: 1 & 1 & {\bf 2} & 1 & - & - & - & - & - & - & - & - & - & - & - & - & - & - \\
N: 1 & 1 & {\bf 2} & 1 & - & - & - & - & - & - & - & - & - & - & - & - & - & - \\
A: 2 & 1 & 4 & {\bf 6} & 4 & 1 & - & - & - & - & - & - & - & - & - & - & - & - \\
N: 2 & 1 & 4 & {\bf 5} & 4 & 2 & - & - & - & - & - & - & - & - & - & - & - & - \\
A: 3 & 1 & 8 & 28 & 56 & {\bf 70} & 56 & 28 & 8 & 1 & - & - & - & - & - & - & - & - \\
N: 3 & 1 & 8 & 19 & 42 & {\bf 59} & 50 & 34 & 28 & 15 & - & - & - & - & - & - & - & - \\
A: 4 & 1 & 16 & 120 & 560 & 1820 & 4368 & 8008 & 11440 & {\bf 12870} & 11440 & 8008 & 4368 & 1820 & 560 & 120 & 16 & 1 \\
N: 4 & 1 & 16 & 65 & 304 & 840 & 1768 & 3250 & 5458 & 8077 & {\bf 9986} & 9819 & 7948 & 5954 & 4458 & 3193 & 2830 & 1569 
\end{tabular}
}
\end{center}
\caption{Distribution of the nonzero coefficients in the ANF and NNF.}
\label{tabNonzeroCoeffDistr}
\end{table*}
\begin{proposition}\label{thmNNFNumOfCoeff}
 Let $f$ be a \Bf in $k$ variables. Let $\fanf$ and $\fnnf$ be respectively the ANF and the NNF of $f$. Then if $\fanf$ is a polynomials with $r \le 2^k$ nonzero coefficients, then $\fnnf$ is a polynomial with no more than $\min\{2^k,2^r-1\}$ nonzero coefficients.
\end{proposition}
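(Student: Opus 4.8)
The plan is to prove the two bounds in the minimum separately. The bound by $2^k$ is immediate: the NNF of a function in $k$ variables is a representative of an element of the affine algebra $\QQ[x_1,\dots,x_k]/\langle x_1^2-x_1,\dots,x_k^2-x_k\rangle$, which has only $2^k$ monomials $X^u$ with $u\in\FF^k$, so $\fnnf$ has at most $2^k$ nonzero coefficients no matter what $r$ is. The substance of the statement is therefore the bound $2^r-1$.

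Write $\fanf = X^{v_1}+_{\FF}\cdots+_{\FF}X^{v_r}$, listing the $r$ monomials appearing in the ANF. The key step is a closed-form conversion of an iterated $\FF$-sum into integer arithmetic: for any $0/1$-valued quantities $a_1,\dots,a_r$,
\[
a_1+_{\FF}\cdots+_{\FF}a_r \;=\; \sum_{\emptyset\neq S\subseteq\{1,\dots,r\}}(-2)^{|S|-1}\prod_{i\in S}a_i .
\]
I would prove this by induction on $r$; the case $r=1$ is trivial, and the inductive step follows by applying \eqref{eq:ANFtoNNF} with $a=a_1+_{\FF}\cdots+_{\FF}a_r$ and $b=a_{r+1}$, substituting the inductive expression for $a$ into $a+b-2ab$, and checking that the subsets of $\{1,\dots,r+1\}$ avoiding $r+1$ come from the $a$-term while those containing $r+1$ come, with coefficient $(-2)^{|S|-1}$, from $b-2ab$.

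Applying this with $a_i=X^{v_i}$ and using that in the affine algebra $\prod_{i\in S}X^{v_i}$ reduces to the single monomial $X^{u_S}$ with $u_S=\bigvee_{i\in S}v_i$ (componentwise maximum, because $x_j^2=x_j$), we obtain
\[
\fnnf \;=\; \sum_{\emptyset\neq S\subseteq\{1,\dots,r\}}(-2)^{|S|-1}\,X^{u_S},
\]
which, after collecting the terms with equal $u_S$, is the NNF of $f$ by uniqueness of the numerical normal form. Since there are $2^r-1$ nonempty subsets $S$, at most $2^r-1$ distinct monomials occur, so $\fnnf$ has at most $2^r-1$ nonzero coefficients; together with the first paragraph this yields the bound $\min\{2^k,2^r-1\}$.

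I do not expect a genuine obstacle here. The only point needing care is the index bookkeeping in the inductive step --- matching the coefficient $(-2)^{|S|-1}$ for subsets that do and do not contain the new index --- together with the observation that the cancellations that may occur when collecting the monomials $X^{u_S}$ can only lower the number of nonzero coefficients below $2^r-1$, never raise it.
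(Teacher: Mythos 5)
Your proof is correct and takes essentially the same route as the paper's: both expand the iterated $\FF$-sum of the $r$ ANF monomials via $a+_{\FF}b=a+_{\ZZ}b-2ab$ into an integer combination of the $2^r-1$ nonempty subset products, and observe that collecting equal monomials can only decrease the count. Your version is in fact more explicit than the paper's, which merely counts the $\binom{r}{i}$ $i$-fold products without exhibiting the coefficients $(-2)^{|S|-1}$ or the induction.
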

\begin{proof}
 When computing the NNF from the ANF we have again the $r$ initial terms of the ANF, plus $\binom{r}{2}$ terms which are all possible double product of the $r$ initial terms, plus, in general, $\binom{r}{i}$ terms which are all possible $i$-product of the $r$ initial terms, for each $i \in \{1,\ldots,r\}$. Thus we will have 
 \begin{align}
  \sum_{i=1}^{r} \binom{r}{i} = 2^r-1
 \end{align}
 terms to be summed together. If no sum of similar monomials becomes zero than we have $2^r-1$ nonzero terms.
\end{proof}
By Proposition \ref{thmNNFNumOfCoeff}, if we want a NNF with no more than $s$ terms then we have to choose the ANF with no more than $r = \log_2(s+1)$ terms.
\begin{proposition}\label{thmNNFofLinBF}
 Let $f$ be a linear \Bf in $k$ variables. Let $\fanf$ and $\fnnf$ be respectively the ANF and the NNF of $f$. 
 Thus, for $i_1<i_2<\ldots<ir, r \le k$,
  \begin{align*}
   \fanf = x_{i_1} + \ldots + x_{i_r}\,,
  \end{align*}
 for $r \le k$.
 Then $\fnnf$ is a polynomial with exactly $2^r-1$ nonzero coefficients:
 \begin{align*}
  \fnnf = \sum_{\substack{v \in (\FF)^r \\ v=(v_1,\ldots,v_h) \ne 0}} (-1)^{\w(v)-1} \binom{r}{\w(v)-1} x_{i_1}^{v_1} \cdots x_{i_r}^{v_r}\,.
 \end{align*}
\end{proposition}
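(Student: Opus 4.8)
The plan is to write down explicitly a square-free polynomial whose evaluation vector is $\underline{f}$ and then appeal to the uniqueness of the NNF (\cite{CGC-cry-art-carlet1999}, Proposition~1, quoted above). First I would note that $f$ depends only on the $r$ variables $x_{i_1},\dots,x_{i_r}$ and that, by uniqueness, its NNF involves no other variable; so after relabelling the variables I may as well assume $r=k$ and $\fanf=x_1+\dots+x_r$, i.e. $f(a)=\w(a)\bmod 2$ for every $a\in(\FF)^r$, with $\w$ the Hamming weight on $(\FF)^r$.

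The key step is the identity $(-1)^{a}=1-2a$, valid for a bit $a\in\{0,1\}$: it gives $(-1)^{\w(a)}=\prod_{j=1}^{r}(1-2a_j)$, hence
\begin{align*}
 f(a)=\frac{1-(-1)^{\w(a)}}{2}=\frac12\Bigl(1-\prod_{j=1}^{r}(1-2a_j)\Bigr)\,.
\end{align*}
The right-hand side is the value at $a$ of the polynomial $P(X)=\frac12\bigl(1-\prod_{j=1}^{r}(1-2x_j)\bigr)$, which is square-free (degree at most $1$ in each variable), so by uniqueness $P=\fnnf$. Expanding $\prod_{j=1}^{r}(1-2x_j)=\sum_{v\in(\FF)^r}(-2)^{\w(v)}X^v$ and pulling out the term $v=0$ yields
\begin{align*}
 \fnnf=-\frac12\sum_{v\in(\FF)^r\setminus\{0\}}(-2)^{\w(v)}X^v=\sum_{v\in(\FF)^r\setminus\{0\}}(-1)^{\w(v)-1}2^{\w(v)-1}X^v\,,
\end{align*}
which, on reading $X^v=x_{i_1}^{v_1}\cdots x_{i_r}^{v_r}$ back in the original variables, is the stated expansion (the coefficient of a weight-$w$ monomial being $(-1)^{w-1}2^{w-1}$). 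Since each such coefficient is nonzero and $|(\FF)^r\setminus\{0\}|=2^{r}-1$, there are exactly $2^{r}-1$ nonzero coefficients.

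Two alternatives stay closer to the tools already set up. (i) Apply Proposition~\ref{propNNFcoeffIntro} directly to $f(a)=\w(a)\bmod 2$: grouping the $a\preceq u$ by weight turns $\lambda_u$ into $(-1)^{\w(u)}\sum_{0\le j\le\w(u)}\binom{\w(u)}{j}(-1)^{j}[\,j\text{ odd}\,]$, and using $[\,j\text{ odd}\,]=\tfrac{1-(-1)^{j}}{2}$ collapses the inner sum to $-2^{\w(u)-1}$. (ii) Induct on $r$, peeling off one variable via $a+_{\FF}b=a+_{\ZZ}b-2ab$ from (\ref{eq:ANFtoNNF}). I do not expect a genuine obstacle here: the statement is elementary, and the only care needed is in the sign and normalisation bookkeeping of the expansion, together with the (routine) use of uniqueness to conclude that a square-free polynomial agreeing with $f$ on $(\FF)^r$ must be its NNF — this last point is also what pins the weight-$w$ coefficient to $(-1)^{w-1}2^{w-1}$.
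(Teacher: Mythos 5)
Your main derivation is mathematically sound and, unlike the paper's one-line proof (``Directly from Proposition \ref{propNNFcoeffIntro}''), actually carries out the computation: writing $f(a)=\frac{1-(-1)^{\w(a)}}{2}$ with $(-1)^{\w(a)}=\prod_{j}(1-2a_j)$, expanding, and invoking uniqueness of the NNF is a clean route, and your alternative (i) is precisely the argument the paper intends. The count of $2^r-1$ nonzero coefficients is correct either way.

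However, the last step of your main argument --- the assertion that $\sum_{v\ne 0}(-1)^{\w(v)-1}2^{\w(v)-1}X^v$ ``is the stated expansion'' --- is false as written. The Proposition prints the coefficient of a weight-$w$ monomial as $(-1)^{w-1}\binom{r}{w-1}$, and $\binom{r}{w-1}\ne 2^{w-1}$ in general: they agree only for $w\le 2$. Already for $r=3$ and $v=(1,1,1)$ the true coefficient is $4$, as one checks from $x_1\oplus x_2\oplus x_3=x_1+x_2+x_3-2(x_1x_2+x_1x_3+x_2x_3)+4x_1x_2x_3$, whereas the printed formula gives $\binom{3}{2}=3$. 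What you have actually proved is that the statement of the Proposition contains an error and that the correct coefficient is $(-1)^{\w(v)-1}2^{\w(v)-1}$; your alternative (i), which evaluates $\lambda_u$ via Proposition \ref{propNNFcoeffIntro} and collapses the inner sum to $-2^{\w(u)-1}$, independently confirms this. You should have flagged the discrepancy between what you derived and what the statement claims rather than declaring them equal; as it stands, your write-up silently asserts the identity $\binom{r}{w-1}=2^{w-1}$, which is the one genuinely wrong step in an otherwise correct (and, for the paper, corrective) proof.
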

\begin{proof}
 Directly from Proposition \ref{propNNFcoeffIntro}.
\end{proof}
Proposition \ref{thmNNFofLinBF} shows that for a linear \Bf, its NNF representation is much denser than its ANF representation.
%
%
\section{Finding the codewords with weight exactly \texorpdfstring{$t$}{Lg}}
  \label{sec:WordsOfWeight_t}
It is possible to construct a polynomial with integer coefficients whose evaluations in $\{0,1\}^k \subseteq \ZZ^k$ are the weights of the codewords of the code $C$.
\begin{definition}\label{def:WeightPolynomial}
 Let $X = \{x_1,\ldots,x_k\}$, and $X^2-X = \{x_1^2-x_1,\ldots,x_k^2-x_k\}$.
 We call the \emph{{\bf weight polynomial}} of the code $C$ the polynomial
 $$\wP_C(X) = \sum_{i=1}^{n}\fnnf_i(X) \in \ZZ[X] / \langle X^2-X \rangle\,,$$
 where the $\fnnf_i$'s are the defining polynomials of the code $C$ in NNF.
\end{definition}
\begin{theorem}
 Let $v \in \{0,1\}^k \subseteq \ZZ^k$. Then there exists a codeword $c \in C$ such that $\w(c) = \wP_C(v)$.
\end{theorem}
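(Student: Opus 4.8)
The plan is to unwind the definitions and use the two facts recalled in Section~\ref{secNNF}: that the NNF of a Boolean function $f$ agrees with $f$ on all points of $\FF^n$ viewed inside $\ZZ^n$, and that each defining polynomial $\fnnf_i$ is exactly the NNF of the function whose truth table is the $i$-th column of the matrix $M$ built from the codewords of $C$. First I would fix the order on $(\FF)^k$ that was chosen when defining $\Fc$, so that each $v \in \{0,1\}^k$ corresponds to a unique index $j$, hence to a unique codeword $c = (c_1,\ldots,c_n) \in C$, namely the $j$-th row of $M$. I will show this is the codeword the statement asks for.

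The key step is the evaluation identity $\fnnf_i(v) = c_i$ for each $i \in \{1,\ldots,n\}$. By construction $\fnnf_i$ is the NNF of the Boolean function $f_i$ whose truth table is the $i$-th column of $M$; since $f_i(v) = c_i \in \{0,1\}$ by definition of $M$, and since (as recalled after Proposition~\ref{propNNFcoeffIntro}) the truth table is recovered from the NNF by the formula $f_i(u) = \sum_{a \preceq u} \lambda_a$, we get that the polynomial $\fnnf_i$, evaluated at the $0/1$ vector $v \in \ZZ^k$, returns the integer $c_i$. Here it matters that we evaluate in $\ZZ^k$ (not $\FF^k$): the NNF is a $\QQ$-polynomial and its defining property is precisely that it interpolates the integer values $f_i(a)$ at the points $a \in \{0,1\}^k$. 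Working modulo $\langle X^2 - X\rangle$ causes no harm since $v_i^2 = v_i$ for $v_i \in \{0,1\}$.

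Summing over $i$ and using Definition~\ref{def:WeightPolynomial},
\[
\wP_C(v) \;=\; \sum_{i=1}^{n} \fnnf_i(v) \;=\; \sum_{i=1}^{n} c_i \;=\; \w(c),
\]
where the last equality holds because each $c_i \in \{0,1\}$, so the integer sum of the coordinates of $c$ is exactly its Hamming weight. This proves the existence of the desired $c \in C$.

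I do not expect a genuine obstacle here; the only subtlety worth stating carefully is the distinction between the Boolean $\FF$-arithmetic and the integer arithmetic of the NNF — one must be explicit that $v$ is read as an element of $\ZZ^k$ and that $\fnnf_i$ evaluates there to the genuine integer $c_i$, rather than to $c_i \bmod 2$. Making that point precise (and noting the quotient by $\langle X^2-X\rangle$ is harmless on $0/1$ inputs) is the whole content of the argument.
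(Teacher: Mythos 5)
Your proof is correct and follows essentially the same route as the paper's: both arguments rest on the fact that each codeword is the integer-valued evaluation $(\fnnf_1(v),\ldots,\fnnf_n(v))$ at the corresponding $0/1$ point, so the integer sum of these $0/1$ values is the Hamming weight. You simply spell out the NNF interpolation property and the harmlessness of the quotient by $\langle X^2-X\rangle$ in more detail than the paper does.
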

\begin{proof}
 It is sufficient to note that $\forall c \in C,c = (\fnnf_1(P),\ldots,\fnnf_n(P))$
 for some $P \in \{0,1\}^k$, and that the sum of all $\fnnf_i$ is over the integers,
 with $\fnnf_i(P) \ge 0$, for $i=1,\ldots,n$.
\end{proof}
Once we have the weight polynomial $\wP_C$ of the code $C$, not only we can find the minimum weight of $C$, but we also find which are the codewords having certain weights by looking at its evaluation vector over the set $\{0,1\}^k$. As we will see in Section \ref{secWPtoEv}, computing this evaluation has a cost of $O(k2^k)$. The complexity maintains the same order if the number of terms of each defining polynomial in NNF is on average $O(\frac{k}{n}2^k)$.\\
\indent
We summarize in Algorithm \ref{algWeightPolyEval} the steps to obtain the weight distribution of a binary $(n,2^k)$-code $C$ given as a list of $2^k$ codewords 
(and thus also the minimum weight of $C$), by finding the evaluation vector of the weight polynomial $\wP_C$. 
We indicate with $C_{i,j}$ the $j$-th component of the $i$-th word of $C$, with $1 \le j \le n$ and $1 \le i \le 2^k$.
\begin{algorithm}[H]
\caption{To find the weight distribution $\underline{\wP}_C$ of a binary nonlinear code $C$.}
\label{algWeightPolyEval}
\begin{algorithmic}[1]
\REQUIRE{$c_1,\ldots,c_{2^k} \in C$}
\ENSURE{the evaluation vector $\underline{\wP}_C$ of $\wP_C$}
\STATE{$\fnnf_{j} \leftarrow \NNF$ of the binary vector $(C_{1,j},\ldots,C_{2^k,j})$ for $1 \le j \le n$}\label{algWPE:stepNNF}
\STATE{$\wP_C \leftarrow \fnnf_{1} + \ldots + \fnnf_{n}$}\label{algWPE:stepWP}
\STATE{$\underline{\wP}_C \leftarrow$ Evaluation of $\wP_C$ over $\{0,1\}^k$}\label{algWPE:stepEV}
\RETURN{$\underline{\wP}_C$} 
\end{algorithmic}
\end{algorithm}
%
%
\section{Finding pairs of codewords with distance exactly \texorpdfstring{$t$}{Lg}}
  \label{sec:PairsOfDistance_t}
It is straightforward to adapt the techniques in Section \ref{sec:WordsOfWeight_t} to the computation of the distance distribution of a code $C$.\\
First, we show how to construct a polynomial with integer coefficients whose evaluations in $\{0,1\}^{2k} \subseteq \ZZ^{2k}$ are the distances of all possible pairs of codewords of the code $C$.
\begin{definition}\label{def:DistancePolynomial}
 Let $X = x_1,\ldots,x_k$, $\tilde{X} = \tilde{x_1},\ldots,\tilde{x_k}$, 
 and $X^2-X = x_1^2-x_1,\ldots,x_k^2-x_k$, $\tilde{X}^2-\tilde{X} = \tilde{x_1}^2-\tilde{x_1},\ldots,\tilde{x_k}^2-\tilde{x_k}$.\\
 We call the \emph{{\bf distance polynomial}} of the code $C$ the polynomial
 \begin{align*}
  \dP_C(X) &= \sum_{i=1}^{n} (\fnnf_i(X)-\fnnf_i(\tilde{X}))^2  \\
           &\in \ZZ[X,\tilde{X}] / \langle X^2-X,\tilde{X}^2-\tilde{X} \rangle\,,
 \end{align*}
 where the $\fnnf_i$'s are the defining polynomials of the code $C$ in NNF.
\end{definition}
Notice that the squaring operation does not introduce squared variables in the expression of $\dP_C$, 
because we are working in the quotient ring $\ZZ[X,\tilde{X}] / \langle X^2-X,\tilde{X}^2-\tilde{X} \rangle$.\\
Notice also that, for $v = (v_1,\ldots,v_k,v_{k+1},\ldots,v_{2k}) \in \{0,1\}^{2k}$, we have that 
$\dP_C((v_1,\ldots,v_k,v_{k+1},\ldots,v_{2k})) = 0$ if and only if $v_i = v_{k+1}$ for $i=1,\ldots,k$, 
and that $\dP_C((v_1,\ldots,v_k,v_{k+1},\ldots,v_{2k})) = \dP_C((v_{k+1},\ldots,v_{2k},v_{1},\ldots,v_{k}))$.
\begin{theorem}
 Let $v \in \{0,1\}^{2k} \subseteq \ZZ^{2k}$ such that $(v_1,\ldots,v_k) \ne (v_{k+1},\ldots,v_{2k})$. 
 Then there exists a pair of distinct codewords $c_1,c_2 \in C$ such that $\dd(c_1,c_2) = \dP_C(v)$.
\end{theorem}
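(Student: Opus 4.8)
The plan is to mirror the argument used for the weight polynomial theorem in Section \ref{sec:WordsOfWeight_t}, exploiting the fact that every codeword of $C$ is the evaluation of the defining-polynomial vector $F=(\fnnf_1,\ldots,\fnnf_n)$ at some point of $\{0,1\}^k$. First I would write $v=(w,\tilde w)$ with $w=(v_1,\ldots,v_k)$ and $\tilde w=(v_{k+1},\ldots,v_{2k})$, both lying in $\{0,1\}^k\subseteq\ZZ^k$. Set $c_1 = (\fnnf_1(w),\ldots,\fnnf_n(w))$ and $c_2 = (\fnnf_1(\tilde w),\ldots,\fnnf_n(\tilde w))$; by the definition of the defining polynomials these are precisely two codewords of $C$.

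The second step is to evaluate $\dP_C$ at $v$ and recognise it as a Hamming distance. Since each $\fnnf_i$ takes values in $\{0,1\}$ on $\{0,1\}^k$, the quantity $\fnnf_i(w)-\fnnf_i(\tilde w)$ lies in $\{-1,0,1\}$, so $(\fnnf_i(w)-\fnnf_i(\tilde w))^2 \in \{0,1\}$ and equals $1$ exactly when the $i$-th coordinates of $c_1$ and $c_2$ differ. Summing over $i=1,\ldots,n$ over the integers therefore gives $\dP_C(v)=\dd(c_1,c_2)$. Here one should note that working in the quotient ring $\ZZ[X,\tilde{X}] / \langle X^2-X,\tilde{X}^2-\tilde{X} \rangle$ does not affect this evaluation, because the reductions $x_j^2 = x_j$ and $\tilde x_j^2 = \tilde x_j$ are valid identities on $\{0,1\}$.

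The last step handles the word ``distinct'' in the statement: I must check that $c_1\ne c_2$. This uses the hypothesis $w\ne\tilde w$ together with the fact that $F\colon(\FF)^k\to(\FF)^n$, read as the encoding map determined by the matrix $M$ of codewords, is injective — distinct rows of $M$ are indexed by distinct vectors of $(\FF)^k$, since the codewords of an $(n,2^k)$-code are by definition pairwise distinct and there are exactly $2^k$ of them, one per element of $(\FF)^k$. Hence $w\ne\tilde w$ forces $c_1\ne c_2$, and in particular $\dP_C(v)=\dd(c_1,c_2)>0$, consistent with the earlier observation that $\dP_C$ vanishes on $v$ precisely when $w=\tilde w$.

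I do not anticipate a serious obstacle: the only point requiring a little care is the injectivity of the codeword indexing (equivalently, that the definition of an $(n,2^k)$-code entails $2^k$ \emph{distinct} codewords), which is what upgrades ``there exists a pair of codewords'' to ``there exists a pair of \emph{distinct} codewords''. Everything else is the same two-line computation as in the weight case, with $(\fnnf_i(X))$ replaced by $(\fnnf_i(X)-\fnnf_i(\tilde X))^2$.
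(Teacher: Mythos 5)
Your proposal is correct and follows essentially the same route as the paper's proof: evaluate the defining polynomials at the two halves of $v$ to get a pair of codewords, observe that squaring turns each coordinate difference into the indicator of disagreement, and sum over the integers to obtain the Hamming distance. Your explicit treatment of distinctness (via the injectivity of the codeword indexing) is a point the paper leaves implicit, but it is the same argument.
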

\begin{proof}
 Note that $\forall c_1,c_2 \in C,c_1 \ne c_2$ we have that 
 $c_1-c_2 = ((\fnnf_1(P)-\fnnf_1(Q))^2,\ldots,(\fnnf_n(P)-\fnnf_n(Q))^2) \in \{0,1\}^n$, 
 for some $P,Q \in \{0,1\}^k,P \ne Q$. 
 The squaring operation is needed in order to correct those components which have become a $-1$ after the subtraction operation.
 Finally, the sum of all $(\fnnf_i(X)-\fnnf_i(\tilde{X}))^2$ is over the integers.
\end{proof}
%
\indent
We summarize in Algorithm \ref{algDistancePolyEval} the steps to obtain the distance distribution of a binary $(n,2^k)$-code $C$ given as a list of $2^k$ codewords 
(and thus also the minimum distance of $C$), by finding the evaluation vector of the distance polynomial $\dP_C$. 
We indicate with $C_{i,j}$ the $j$-th component of the $i$-th word of $C$, with $1 \le j \le n$ and $1 \le i \le 2^k$.
\begin{algorithm}[H]
\caption{To find the distance distribution $\underline{\dP}_C$ of a binary nonlinear code $C$.}
\label{algDistancePolyEval}
\begin{algorithmic}[1]
\REQUIRE{$c_1,\ldots,c_{2^k} \in C$}
\ENSURE{the evaluation vector $\underline{\dP}_C$ of $\dP_C$}
\STATE{$\fnnf_{j} \leftarrow \NNF$ of the binary vector $(C_{1,j},\ldots,C_{2^k,j})$ for $1 \le j \le n$}\label{algDPE:stepNNF}
\STATE{$\dP_C \leftarrow (\fnnf_1(X)-\fnnf_1(\tilde{X}))^2 + \ldots + (\fnnf_n(X)-\fnnf_n(\tilde{X}))^2$}\label{algDPE:stepWP}
\STATE{$\underline{\dP}_C \leftarrow$ Evaluation of $\dP_C$ over $\{0,1\}^{2k}$}\label{algDPE:stepEV}
\RETURN{$\underline{\dP}_C$} 
\end{algorithmic}
\end{algorithm}
%
%
\section{Complexity considerations}
  \label{sec:WeightComplexity}
First of all let us notice that given a binary $(n,2^k)$-code as a list of $2^k$ codewords, 
to find the weight distribution of a binary nonlinear code $C$ using brute force requires $n2^k$ bit operations, since we have to check each component of each codeword of $C$.
Similarly, to find the distance distribution, $n2^{2k}$ operations are needed. \\
We note that the operations involved in our following complexity estimates are over the integers, but 
the size of the integers involved in our operations is limited by $2^k$, and they have a sparse binary representation in the random case (they are sparse sums of powers of 2). \\
We now analyze the complexity of Steps~\ref{algWPE:stepNNF}, \ref{algWPE:stepWP}, and \ref{algWPE:stepEV} of Algorithm \ref{algWeightPolyEval} and \ref{algDistancePolyEval}. 
Then, due to the similarities of the two algorithms, we only concentrate on the first one.
We compare our method to compute the minimum weight of a binary code with brute force and, in the linear case, with the Brouwer-Zimmerman method (\cite{CGC-cd-book-zimmermann1996}).
We provide more emphasis on the comparison in the linear case, since no other methods than brute force are known in the nonlinear case, 
(with the exception of \cite{CGC-cd-art-pujol2012minimum,CGC-cd-art-villanueva2014efficient}).
\subsection{From list of codewords to defining polynomials in NNF}
\label{secCodeWordsToDefPol}
\begin{proposition}
 The overall worst-case complexity of determining the coefficients of the $n$ defining polynomials in NNF of the code $C$ given as a list of vectors is $O(nk2^k)$.
\end{proposition}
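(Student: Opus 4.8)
The plan is to reduce the statement to $n$ independent applications of the single–function transform recalled in Section~\ref{secNNF}. First I would make explicit what ``the $n$ defining polynomials in NNF'' means computationally: by definition the truth table of $\fnnf_j$ is the $j$-th column of the matrix $M$ whose rows are the $2^k$ codewords of $C$. So, given $C$ as a list of $2^k$ vectors, one first transposes, i.e. reads off the $n$ columns of length $2^k$; this costs $O(n2^k)$ elementary operations, which will turn out to be dominated.

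Next, for each fixed $j \in \{1,\ldots,n\}$ I would invoke the fast Fourier–like algorithm that converts the truth table of a $k$-variable function into its NNF coefficients. As stated in Section~\ref{secNNF} (following \cite{CGC-cry-art-carlet1999}), this is a divide-and-conquer butterfly over $k$ layers, each layer performing $2^{k-1}$ additions, hence $O(k2^k)$ additions over $\KK=\QQ$ and $O(2^k)$ storage. All intermediate quantities are integers, and by Proposition~\ref{propNNFcoeffIntro} every coefficient $\lambda_u$ satisfies $|\lambda_u|\le 2^{\w(u)}\le 2^k$, so each such addition acts on integers of at most $k+1$ bits; in the word (arithmetic) model it is a unit-cost operation. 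Summing over the $n$ columns then gives $n\cdot O(k2^k)+O(n2^k)=O(nk2^k)$, which is the claimed bound.

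I would close with a remark that this is worst-case optimal up to the factor $k$: by Proposition~\ref{thmNNFofLinBF} the NNF of even a linear Boolean function in $r$ variables already has $2^r-1$ nonzero coefficients, so in the worst case each of the $n$ defining polynomials genuinely has $\Theta(2^k)$ nonzero coefficients that must all be written out, forcing $\Omega(n2^k)$ operations just to produce the output.

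The proof has no real hard step. The only points requiring care are (i) the choice of cost model for the integer additions — charging $O(1)$ per addition yields the stated $O(nk2^k)$, whereas counting bit operations would insert an extra factor of $k$, a caveat worth flagging but consistent with the ``integers bounded by $2^k$'' observation made earlier in Section~\ref{sec:WeightComplexity} — and (ii) correctly identifying the transposition step as asymptotically dominated; the $O(k2^k)$ cost of the single-function NNF transform itself is quoted from Section~\ref{secNNF}, not re-derived.
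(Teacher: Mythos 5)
Your proposal is correct and follows essentially the same route as the paper: apply the single-function truth-table-to-NNF transform (costing $k2^{k-1}$ integer operations, as quoted from Carlet) to each of the $n$ columns of the codeword matrix, yielding $O(nk2^k)$. The extra observations you add --- that transposition is dominated, that the coefficients are bounded by $2^k$ so each addition is unit cost in the arithmetic model, and the output-size lower bound --- are sound refinements of the same argument rather than a different proof.
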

\begin{proof}
We want to find the NNF of the Boolean function whose truth table is given by a column of the binary matrix 
whose rows are the codewords of the code $C$. 
In \cite[Proposition 2]{CGC-cry-art-carlet1999} it is shown that to compute the NNF of a Boolean function in $k$ variables given its truth table requires $k2^{k-1}$ integer subtractions. Since we have to compute the NNF for $n$ columns the overall complexity is $O(nk2^k)$.
\end{proof}
%
\subsection{From defining polynomials to weight polynomial}
\label{secDefPolToWP}
\begin{proposition}
 The overall worst-case complexity of summing together all the defining polynomials in NNF is $O(n2^k)$.
\end{proposition}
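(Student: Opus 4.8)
The plan is to bound the cost of Step~\ref{algWPE:stepWP}, namely forming $\wP_C = \fnnf_1 + \cdots + \fnnf_n$ in the quotient ring $\ZZ[X]/\langle X^2-X\rangle$, which I regard as summing $n$ coefficient vectors each indexed by the $2^k$ monomials $X^u$, $u \in \FF^k$. First I would observe that, since every $\fnnf_i$ is already reduced modulo $\langle X^2-X\rangle$ (being an NNF), no further reduction is needed: adding two such polynomials is simply componentwise addition of their coefficient vectors, with no new monomials created. Hence adding all $n$ of them is accomplished by $n-1$ vector additions, each touching at most $2^k$ coefficients, for a total of at most $(n-1)2^k = O(n2^k)$ integer additions.

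The one subtlety to address is the size of the integers being added, since the complexity is ultimately measured in bit operations. Here I would invoke the remark already made in the paper that the partial sums $\sum_{i\le m}\lambda^{(i)}_u$ are bounded in absolute value by a quantity of order $2^k$ (each coefficient of an NNF of a Boolean function in $k$ variables is bounded in this way, and we sum at most $n$ of them — but in fact the evaluations themselves are at most $n$, and the relevant bound on intermediate coefficients follows from Proposition~\ref{propNNFcoeffIntro}), so each integer has $O(k)$ bits. Strictly, this makes the bit complexity $O(nk2^k)$ in the worst case; but matching the paper's convention elsewhere in this section, the integer additions are counted as unit operations (equivalently, the integers are sparse in the random case), giving $O(n2^k)$ as claimed. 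I would state the bound as $O(n2^k)$ arithmetic operations over $\ZZ$, consistent with the surrounding propositions.

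The main (and essentially only) obstacle is purely expository: making precise what ``operation'' is being counted and reconciling it with the coefficient-size discussion, so that the $O(n2^k)$ figure is not misread as a bit-complexity claim when in the worst case the bit cost carries an extra factor of $k$. There is no combinatorial difficulty: the result is immediate once one recognizes that summation in $\ZZ[X]/\langle X^2-X\rangle$ of already-reduced polynomials is coordinatewise addition of length-$2^k$ vectors performed $n-1$ times.
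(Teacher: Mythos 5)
Your proposal is correct and follows essentially the same argument as the paper: each defining polynomial in NNF is square-free in $k$ variables and hence has at most $2^k$ monomials, so summing the $n$ of them costs at most $n2^k$ coefficient additions. Your additional discussion of integer bit-sizes is a reasonable clarification that the paper handles separately in its preliminary remark on the size of the integers involved, but it does not change the substance of the proof.
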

\begin{proof}
 Each monomial $m$ in a defining polynomial is square-free, and since $m \in \ZZ[x_1,\ldots,x_k]$, then a defining polynomial can have no more than $2^k$ monomials. Since the defining polynomials are $n$, the proposition follows.
\end{proof}
\begin{remark}
 Clearly, the computational complexity of this steps decreases if the defining polynomials are sparse when considering their NNF.
\end{remark}
\subsection{From defining polynomials to distance polynomial}
\label{secDefPolToDP}
\begin{proposition}
 The overall worst-case complexity of Step \ref{algDPE:stepWP} of Algorithm \ref{algDistancePolyEval} is $O(n2^{2k})$.
\end{proposition}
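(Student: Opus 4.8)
The plan is to mirror the argument used for the $O(n2^k)$ bound on Step~\ref{algWPE:stepWP} of Algorithm~\ref{algWeightPolyEval}, adjusting for the fact that $\dP_C$ now lives in $\ZZ[X,\tilde{X}]/\langle X^2-X,\tilde{X}^2-\tilde{X}\rangle$, which has dimension $2^{2k}$ over $\ZZ$. First I would bound the sizes of the intermediate objects. Each $\fnnf_i$ is square-free in $x_1,\dots,x_k$, hence has at most $2^k$ monomials, and likewise $\fnnf_i(\tilde{X})$ has at most $2^k$ monomials in $\tilde{x}_1,\dots,\tilde{x}_k$; therefore $\fnnf_i(X)-\fnnf_i(\tilde{X})$ has at most $2^{k+1}$ monomials and is formed in $O(2^k)$ operations (relabel the variables, negate, concatenate — the two blocks of monomials are disjoint).

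Next I would bound the cost of one squaring. Expanding $(\fnnf_i(X)-\fnnf_i(\tilde{X}))^2 = \fnnf_i(X)^2 - 2\,\fnnf_i(X)\fnnf_i(\tilde{X}) + \fnnf_i(\tilde{X})^2$, the cross term is a product of a polynomial in the $X$'s and one in the $\tilde{X}$'s, each with at most $2^k$ monomials; since the two variable blocks are disjoint this product has at most $2^{2k}$ monomials and needs no reduction, while $\fnnf_i(X)^2$ and $\fnnf_i(\tilde{X})^2$ cost at most $O(2^{2k})$ schoolbook monomial multiplications each (with the usual reduction $x_j^2 \mapsto x_j$) and produce at most $2^k$ monomials. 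So each summand $(\fnnf_i(X)-\fnnf_i(\tilde{X}))^2$ is computed in $O(2^{2k})$ integer operations and, after reduction, has at most $2^{2k}$ monomials. Finally I would accumulate: there are $n$ such summands and adding each into the running total costs $O(2^{2k})$, for a total of $O(n2^{2k})$, which dominates the $O(n2^k)$ spent on the differences and yields the claimed bound.

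The step I expect to be the main obstacle is making precise that reduction modulo $\langle X^2-X,\tilde{X}^2-\tilde{X}\rangle$ does not inflate the asymptotics: one has to argue that such reductions only merge monomials (e.g.\ representing exponent vectors as bit strings, the product monomial is obtained by a bitwise OR), so that the worst-case monomial count stays at the dimension $2^{2k}$ of the quotient ring rather than the $(2^{k+1})^2$ of a naive dense expansion, and that each such bookkeeping operation is counted at unit cost in the same spirit as the earlier propositions (where integer additions, all on integers bounded by $2^k$, were also treated as unit operations). I would also note, exactly as after Step~\ref{algWPE:stepWP}, that this bound improves when the defining polynomials are sparse in NNF, since then far fewer than $2^k$ monomials occur in each $\fnnf_i$ and the cross term has correspondingly fewer terms.
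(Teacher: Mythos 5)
Your proposal is correct and follows essentially the same route as the paper: form each difference $\fnnf_i(X)-\fnnf_i(\tilde{X})$ as a (sign-adjusted) concatenation with at most $2^{k+1}$ terms, observe that squaring it costs $O(2^{2k})$ integer operations and yields at most $2^{2k}$ monomials in the quotient ring, and then accumulate the $n$ squares at $O(2^{2k})$ additions each. The only cosmetic difference is that you expand the square via the binomial identity into three products while the paper squares the $2^{k+1}$-term polynomial directly (counting $2^{2(k+1)}$ multiplications and as many sums), but the accounting and the resulting bound are identical.
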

\begin{proof}
The sum $\hat{f}_i = \fnnf_i(X)-\fnnf_i(\tilde{X})$ for $i=1,\ldots,n$ is just a concatenation of coefficients, 
where the coefficients of $\fnnf_i(\tilde{X})$ need to have their sign switched.\\
The polynomial obtained has $2^{k+1}$ terms in the worst case, and squaring it requires $2^{2(k+1)}$ integer multiplications and the same number of integer sums,
for a total of $2^{2k+3}$ integer operations. 
Since we have $n$ such polynomials $\hat{f}_i$, to compute their square requires $n2^{2k+3}$ integer operations. 
Each $\hat{f}_i$ has at most $2^{2k}$ terms, since $\hat{f}_i \in \ZZ[X,\tilde{X}] / \langle X^2-X,\tilde{X}^2-\tilde{X} \rangle$.
Summing all $\hat{f}_i$ together thus requires at most $n2^{2k}$ integer sums.
The overall worst-case complexity of Step \ref{algDPE:stepWP} of Algorithm \ref{algDistancePolyEval} is then
$$n2^{2k+3}+n2^{2k} = n2^{2k}(2^3+1)\,.$$
\end{proof}
\begin{remark}
Again, the complexity of this step is lower if the defining polynomials are sparse in their NNF.
If, for example, the nonzero coefficients of $\fnnf_i(X)$ are $\sim k$, so are the coefficients of $\fnnf_i(\tilde{X})$, 
and the squaring of $\hat{f}_i$ requires $\sim (2k)^2$ integer operations.
\end{remark}
\subsection{Evaluation of the weight and the distance polynomial}
\label{secWPtoEv}
Algorithm \ref{algFMTint} describes the fast M\"obius transform to compute the evaluation vector of a Boolean function $f$ in NNF in $k$ variables. \\
We use the following notation: the coefficient $c_{2^k}$ is the coefficient of the greatest monomial, 
i.e. of $x_1 \cdots x_k$, $c_{2^k-1}$ the coefficient of the second greatest monomial, and so on until $c_{1}$, which is the costant term. 
We provide Example \ref{exFMT} to clarify our notation.\\
Notice that the sum in Step \ref{stepUPDATE} is over our integers. 
If it was a sum in $\FF$ then we would obtain the truth table of $f$.
\begin{algorithm}[H]
\caption{Fast M\"obius transform for fast integer polynomial evaluation.}
\label{algFMTint}
\begin{algorithmic}[1]
\REQUIRE{vector of coefficients $c=(c_1,\ldots,c_{2^k})$}
\ENSURE{evaluation vector $e=(e_1,\ldots,e_{2^k})$}
\STATE{$e \leftarrow c$}
\FOR{$i=0,\ldots,k$}
  \STATE{$b \leftarrow 0$}
  \REPEAT
    \FOR{$x = b,\ldots,b+2^i-1$}
      \STATE{$e_{x+1+2^i} \leftarrow e_{x+1} + e_{x+1+2^i}$}
      \label{stepUPDATE}
    \ENDFOR
    \STATE{$b \leftarrow b + 2^{i+1}$}
  \UNTIL{$b = 2^k$}
\ENDFOR
\RETURN $e$
\end{algorithmic}
\end{algorithm}
\begin{example}\label{exFMT}
 Consider $k=3$ and lexicographical ordering with $x_1 \succ x_2 \succ x_3$. Let $f = 8x_1x_2x_3 + 3x_1 + 2$. Then 
 $c = (c_1,\ldots,c_{8}) = (2,0,0,0,3,0,0,8)$ and $e = (e_1,\ldots,e_{8}) = (2,2,2,2,5,5,5,13)$.
\end{example}
\begin{proposition}\label{thmEvWp}
 Evaluating the weight polynomial over the set $\{0,1\}^k$ has a computational cost of $O(k2^k)$.
\end{proposition}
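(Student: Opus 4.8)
The plan is to observe that evaluating $\wP_C$ over the whole cube $\{0,1\}^k$ is exactly the computation performed by the fast M\"obius transform of Algorithm \ref{algFMTint}, applied to the coefficient vector of $\wP_C$, and then to count the arithmetic operations that algorithm performs. First I would recall that $\wP_C \in \ZZ[X]/\langle X^2-X\rangle$ is a square-free polynomial in $k$ variables, hence has at most $2^k$ coefficients $c = (c_1,\ldots,c_{2^k})$, indexed by the monomials of $\FF^k$ in the ordering fixed before Example \ref{exFMT}; storing this vector is the input. Then I would recall (from the discussion preceding the algorithm, and from the fact that $f(u)=\sum_{a\preceq u}\lambda_a$ for the NNF) that running Algorithm \ref{algFMTint} with integer additions in Step \ref{stepUPDATE} returns precisely the evaluation vector $(\wP_C(\op_1),\ldots,\wP_C(\op_{2^k}))$, i.e. the list of codeword weights.

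The second, quantitative step is the operation count. The algorithm consists of an outer loop over $i = 0,\ldots,k$; for each fixed $i$, the \textbf{repeat}-loop together with the inner \textbf{for}-loop over $x$ performs exactly one integer addition (the update in Step \ref{stepUPDATE}) for each element of $\{0,1\}^k$ that occupies a ``$+2^i$'' position in its block — that is, $2^{k-1}$ additions per value of $i$. Summing over the $k{+}1$ values of $i$ (or $k$ of them, depending on indexing) gives at most $(k+1)2^{k-1} = O(k2^k)$ integer additions, and $O(2^k)$ storage. I would also note, referring to the remark opening Section \ref{sec:WeightComplexity}, that each integer handled is a partial sum of at most $n$ bits bounded by $2^k$, so treating an integer addition as $O(1)$ (or at most $O(k)$ bit operations, which does not change the stated order for the intended regime) is legitimate; hence the claimed $O(k2^k)$ cost.

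The only real subtlety — and the step I would be most careful about — is justifying that Algorithm \ref{algFMTint} genuinely computes the multivariate evaluation, i.e. that the bit-indexed in-place butterfly correctly realises the map $\lambda \mapsto \big(\sum_{a\preceq u}\lambda_a\big)_{u}$ simultaneously in all $k$ coordinates. This is the standard correctness argument for the fast M\"obius / zeta transform: by induction on $i$, after the $i$-th pass the entry in position $u$ equals $\sum \lambda_a$ over those $a\preceq u$ that agree with $u$ outside the first $i$ coordinates. I would either cite this as the well-known fast M\"obius transform (\cite{CGC-cd-book-carlet}, p.~10, adapted to $\ZZ$-coefficients as in \cite{CGC-cry-art-carlet1999}) and verify it against Example \ref{exFMT}, or spell out the one-line inductive step; everything after that is the routine geometric-series count above. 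The analogous statement for $\dP_C$ follows verbatim with $2k$ in place of $k$, giving $O(k2^{2k})$, which is why the paper treats only the weight case in detail.
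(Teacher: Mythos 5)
Your proposal is correct and follows exactly the paper's route: the paper's proof is the one-line observation that this is the cost of Algorithm \ref{algFMTint}, namely $k2^{k-1}$ integer sums, which is precisely your operation count for the fast M\"obius transform applied to the coefficient vector of $\wP_C$. The extra material you supply (the inductive correctness of the butterfly and the remark on integer sizes) is a more careful elaboration of what the paper takes for granted, not a different argument.
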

\begin{proof}
  This is the cost of Algorithm \ref{algFMTint}, i.e. $k2^{k-1}$ integer sums.
\end{proof}
Similarly
\begin{proposition}\label{thmEvWp}
 Evaluating the distance polynomial over the set $\{0,1\}^{2k}$ has a computational cost of $O(k2^{2k})$.
\end{proposition}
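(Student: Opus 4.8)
The plan is to reduce the statement to the evaluation cost already obtained for the weight polynomial, by observing that the distance polynomial is an element of a Boolean‑type quotient ring in $2k$ variables rather than $k$. First I would note that, by Definition \ref{def:DistancePolynomial}, $\dP_C \in \ZZ[X,\tilde X]/\langle X^2-X,\tilde X^2-\tilde X\rangle$; collecting the variables into a single list $Y=(x_1,\dots,x_k,\tilde x_1,\dots,\tilde x_k)$ of length $2k$, this is exactly the ambient ring $\ZZ[Y]/\langle Y^2-Y\rangle$ of a weight polynomial of a code with $2k$ information symbols. As already remarked after Definition \ref{def:DistancePolynomial}, the squaring in the definition of $\dP_C$ does not introduce squared variables because of the quotient relations, so $\dP_C$ is square‑free in each of its $2k$ variables, has at most $2^{2k}$ monomials, and is specified by a coefficient vector with $2^{2k}$ entries (this vector being produced in Step \ref{algDPE:stepWP} of Algorithm \ref{algDistancePolyEval}).

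Next I would invoke Algorithm \ref{algFMTint}, the fast M\"obius transform, exactly as it was used to establish the $O(k2^k)$ bound for the weight polynomial: given the coefficient vector of a square‑free integer polynomial in $m$ variables, it returns the evaluation vector over $\{0,1\}^m$ using $m\,2^{m-1}$ integer additions. Applying this verbatim with $m=2k$ to the coefficient vector of $\dP_C$ yields the evaluation vector over $\{0,1\}^{2k}$ in
$$
2k\cdot 2^{2k-1} = k\,2^{2k}
$$
integer additions, which is $O(k2^{2k})$, as claimed.

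The only point needing a little care — the ``obstacle'', such as it is — is checking that the two‑block structure of $Y$ does not invalidate the transform: the evaluation set is the full cube $\{0,1\}^{2k}$, the monomials are square‑free, and the coefficients are integers, so all hypotheses of Algorithm \ref{algFMTint} are met. One could add the remark that the diagonal points ($v_i=v_{k+i}$ for all $i$) give value $0$ and that the evaluation is invariant under swapping the two blocks, as observed after Definition \ref{def:DistancePolynomial}, so fewer than $2^{2k}$ distinct values actually occur; but this refinement does not change the asymptotic order, and the stated bound follows from the plain count above.
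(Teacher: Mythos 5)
Your proof is correct and is exactly the argument the paper intends: the paper gives no explicit proof, writing only ``Similarly'' after the weight-polynomial case, and your application of the fast M\"obius transform (Algorithm \ref{algFMTint}) in $2k$ variables, costing $2k\cdot 2^{2k-1}=k2^{2k}$ integer additions, is the straightforward instantiation of that analogy. Your extra checks (square-freeness after the quotient, integrality of coefficients, the symmetry/diagonal remarks) are sound but not needed for the asymptotic bound.
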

%
\subsection{Comparison with brute-force method}
Because of the similarities of Algorithms \ref{algWeightPolyEval} and \ref{algDistancePolyEval}, 
we now concentrate our analysis only on Algorithm \ref{algWeightPolyEval}. 
All considerations we expose can be easily extended for Algorithm \ref{algDistancePolyEval}.\\
\begin{theorem}\label{thmWPcomplexity}
 Let $h$ be a positive integer. If the code $C$ is given as a set of \Bf's whose NNF have on average $2^k/h$ coefficients different from $0$, then computing the minimum weight of $C$ requires at most
 \begin{align*}
  \left( \frac{n}{h}+k \right) 2^k \,.
 \end{align*}
\end{theorem}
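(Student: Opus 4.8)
The plan is to add up the costs of the three steps of Algorithm~\ref{algWeightPolyEval} under the sparsity hypothesis, using the propositions already established, and then observe that the minimum weight is read off the evaluation vector for free. Concretely, Step~\ref{algWPE:stepNNF} converts the $n$ columns of the code matrix into defining polynomials in NNF; by \cite[Proposition~2]{CGC-cry-art-carlet1999} each conversion costs $k2^{k-1}$ integer subtractions, so this step costs $nk2^{k-1}$, which is $O(nk2^k)$ in the worst case. However, since we are told the resulting NNFs have on average $2^k/h$ nonzero coefficients, the subsequent steps become cheaper: Step~\ref{algWPE:stepWP} sums the $n$ defining polynomials, and each has at most $2^k/h$ nonzero terms on average, so the sum costs at most $n\cdot 2^k/h = (n/h)2^k$ integer additions. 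Finally, Step~\ref{algWPE:stepEV} evaluates the weight polynomial $\wP_C$ over $\{0,1\}^k$ via the fast M\"obius transform of Algorithm~\ref{algFMTint}, which by Proposition~\ref{thmEvWp} costs $k2^{k-1}$ integer sums, i.e. $O(k2^k)$.

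Next I would combine these: the dominant terms are $(n/h)2^k$ from the summation step and $k2^k$ from the evaluation step, giving a total of order $\left(\frac{n}{h}+k\right)2^k$, which is exactly the claimed bound. The one subtlety is Step~\ref{algWPE:stepNNF}: a naive reading gives $nk2^{k-1}$, which is \emph{not} dominated by $\left(\frac{n}{h}+k\right)2^k$ in general. The way to handle this, consistent with the discussion preceding the theorem (``The complexity maintains the same order if the number of terms of each defining polynomial in NNF is on average $O(\frac{k}{n}2^k)$'' and the remark that the size of the integers is bounded by $2^k$ with sparse binary representation), is to note that when the code is \emph{given} as a set of NNF \Bf's — as the hypothesis of the theorem states (``\emph{given} as a set of \Bf's whose NNF have\dots'') — the conversion Step~\ref{algWPE:stepNNF} is already done, so it contributes nothing, and only Steps~\ref{algWPE:stepWP} and \ref{algWPE:stepEV} are counted. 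Under that reading the bound is immediate.

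The main obstacle I anticipate is precisely pinning down what counts as input: the theorem statement says the code is given as NNF \Bf's, so Step~\ref{algWPE:stepNNF} of the algorithm is not part of the cost being bounded, whereas the earlier ``list of $2^k$ codewords'' framing would force us to carry the $nk2^{k-1}$ term. I would resolve this by stating explicitly at the start of the proof that we assume the NNF representation is the input (matching the hypothesis), so the cost is that of Steps~\ref{algWPE:stepWP}--\ref{algWPE:stepEV} only, namely at most $(n/h)2^k$ integer sums for the polynomial summation plus $k2^{k-1}$ integer sums for the fast M\"obius evaluation, and $(n/h)2^k + k2^{k-1} \le \left(\frac{n}{h}+k\right)2^k$. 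A minor secondary point worth a sentence is that once $\underline{\wP}_C$ is computed, extracting the minimum weight is a single linear scan of a length-$2^k$ vector, costing $O(2^k)$, hence absorbed into the bound; so the total to compute the minimum weight is indeed at most $\left(\frac{n}{h}+k\right)2^k$.
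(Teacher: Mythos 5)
Your proposal is correct and matches the paper's own proof: the paper likewise excludes the codewords-to-NNF conversion (since the theorem's hypothesis takes the NNF representation as the input), charges $(n/h)2^k$ integer sums for summing the $n$ defining polynomials and $k2^{k-1}$ for the fast M\"obius evaluation, and concludes with the bound $(n/h)2^k + k2^{k-1}$. Your explicit resolution of the input-representation subtlety and the remark about the final linear scan are just slightly more careful versions of what the paper leaves implicit.
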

\begin{proof}
 By Proposition \ref{thmEvWp} computing the evaluation vector of the weight polynomial $\wP_C$ requires $k2^{k-1}$ integer sums using the fast M\"obius transform. To compute the weight polynomial we need to sum the $n$ defining polynomials $\fnnf_i,i=1,\ldots,n,$ in NNF. If each of these polynomials has on average $2^k/h$ coefficients then the complexity of computing $\wP_C$ requires $O(n \frac{2^k}{h})$ integer sums. So the final complexity is at most $(n/h)2^k+k2^{k-1}$.
\end{proof}
%
\begin{remark}
Our method is more efficient than brute force when $n/h + k < n$. This is very likely to happen for a random code of low information rate where $k \ll n$. 
If $k \sim n$ and the NNF is dense, then it is convenient to use brute force rather than our method.
\end{remark}
Notice also that if the sets of nonzero monomials of two polynomials in NNF are disjoint, then the sum of the two polynomials is simply their concatenation. So, if the defining polynomials of a code are ``disjoint'', then the cost of computing the weight polynomial is $O(1)$, and the final cost of finding the minimum weight becomes the cost of computing the evaluation of $\wP_C$, i.e. $O(k2^{k-1})$.\\
Fact \ref{thmComparisonVSBruteForceLinCase} shows that, for $n \gg k$, when the code is linear our method to compute the minimum nonzero weight (i.e. the distance of the code) given the set of the defining polynomials in NNF is more efficient than the classical method which uses brute force, given the list of the codewords of the code.
\begin{fact}[Comparison with brute force, linear case, $n \sim 2^k$] \label{thmComparisonVSBruteForceLinCase}
 Consider a random binary $[n,k]$-linear code $C$ such that $n \sim 2^k$. Then computing the weight distribution of $C$
 \begin{enumerate}
  \item given the list of its codewords and using brute force requires
  $O(2^{2k})$.
  \item given the list of the defining polynomials in NNF and finding the minimum of $\wP_C$ requires
  $O(2^{\frac{3}{2}k})$.
 \end{enumerate}
\end{fact}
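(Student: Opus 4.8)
The plan is to handle the two items separately; each one reduces to substituting the hypothesis $n\sim 2^k$ into a fact already available, and the only genuine content is a density estimate for the NNF of the defining polynomials of a linear code. Item (1) is immediate: it was noted at the start of Section~\ref{sec:WeightComplexity} that, given a binary $(n,2^k)$-code as the list of its $2^k$ codewords, brute-force computation of the weight distribution costs $n2^k$ bit operations (for each of the $n$ coordinates, count the ones among the $2^k$ entries); with $n\sim 2^k$ this is $O(2^{2k})$, with nothing further to prove.

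For item (2) I would invoke Theorem~\ref{thmWPcomplexity}: if the defining polynomials of $C$ have on average $2^k/h$ nonzero NNF coefficients, then computing the minimum weight of $C$ via Algorithm~\ref{algWeightPolyEval} --- which in fact returns the whole evaluation vector $\underline{\wP}_C$, i.e.\ the weight distribution --- costs at most $(n/h+k)2^k$. So the task is to pin down $h$ for a random linear code. Since $C$ is linear, each defining polynomial is the NNF of a linear Boolean function $x_{i_1}+\cdots+x_{i_r}$ in $k$ variables, and Proposition~\ref{thmNNFofLinBF} says that such an NNF has \emph{exactly} $2^r-1$ nonzero coefficients, where $r$ is the number of variables actually occurring. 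For a random linear form in $k$ variables (equivalently, a random column of a $k\times n$ generator matrix) $r$ is on average $k/2$, so the defining polynomials have on average about $2^{k/2}-1$ nonzero NNF coefficients; in the notation of Theorem~\ref{thmWPcomplexity} this reads $2^k/h\approx 2^{k/2}$, i.e.\ $h\approx 2^{k/2}$.

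It then remains to assemble the bound. With $n\sim 2^k$ and $h\approx 2^{k/2}$, Theorem~\ref{thmWPcomplexity} gives
\[
  \Bigl(\frac{n}{h}+k\Bigr)2^k \;\approx\; \bigl(2^{k/2}+k\bigr)2^k \;=\; O\!\left(2^{\frac{3}{2}k}\right),
\]
the dominant term $(n/h)2^k\approx 2^{3k/2}$ coming from summing the $n$ defining polynomials and the $k2^k$ term (the fast M\"obius evaluation of $\wP_C$, Algorithm~\ref{algFMTint}) being of strictly smaller exponential order. One should also record that every integer involved is bounded by $n\sim 2^k$, hence has bit-length $O(k)$, so converting the integer-operation count into a bit-operation count costs only a further factor polynomial in $k$, which leaves the exponent $\tfrac32 k$ intact.

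The step I expect to be the main obstacle --- essentially the only non-routine one --- is the density estimate $2^k/h\approx 2^{k/2}$. Taken literally it replaces the average of $2^r-1$ over random linear forms by $2^{\overline r}-1$ with $\overline r=k/2$, i.e.\ it commutes the average past the exponential, which is not valid exactly (the true average is $(3/2)^k-1$). To make it precise one should either restrict attention to the generic balanced form, which involves exactly $\lfloor k/2\rfloor$ variables and for which Proposition~\ref{thmNNFofLinBF} gives precisely $2^{\lfloor k/2\rfloor}-1$ monomials, or argue that the very sparse and very dense columns occur rarely enough not to disturb the leading-order count. Once that point is settled the remainder of the argument is routine bookkeeping.
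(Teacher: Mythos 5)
Your proof follows the paper's own argument essentially verbatim: item (1) is the $n2^k$ brute-force count specialized to $n\sim 2^k$, and item (2) plugs the density estimate $n/h\sim 2^{k/2}$ (obtained from the average ANF weight $k/2$ of a random linear form together with Proposition~\ref{thmNNFofLinBF}) into Theorem~\ref{thmWPcomplexity}. The subtlety you flag at the end --- that averaging $2^r-1$ over a binomial $r$ gives $(3/2)^k-1\approx 2^{0.585k}$ rather than $2^{k/2}-1$, which would push the exponent to about $1.585k$ --- is a real gap, but it is one the paper's own proof commits silently, so your write-up is, if anything, more careful than the original.
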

\begin{proof}
  The complexity of finding the weight distribution of $C$ in case 1 is $O(n2^k) = O(2^{2k})$, since $n \sim 2^k$.\\
  The complexity of finding the weight distribution of $C$ in case 2 is $O((n/h+k)2^k)$ (by Theorem \ref{thmWPcomplexity}), where $n/h$ is the average number of nonzero coefficients of the NNF. If the linear code $C$ is random, then so are the random linear defining polynomials. A random linear function in $k$ variables has on average $k/2$ nonzero coefficient in ANF and thus $2^{k/2}-1$ nonzero coefficients in NNF 
  , i.e. $n/h \sim 2^{k/2}$, and
  $$O((n/h+k)2^k) = O((2^{k/2}+k)2^k) = O(2^{\frac{3}{2}k})\,.$$
\end{proof}
\begin{fact}[Comparison with brute force, nonlinear case, $n \sim 2^k$] \label{thmComparisonVSBruteForceNonLinCase}
Consider a random binary $(n,2^k)$-nonlinear code $C$ such that $n \sim 2^k$, and whose defining polynomials have on average $k/2$ nonzero coefficients in the ANF. 
Then computing the weight distribution of $C$ given the list of the defining polynomials in NNF and finding the minimum of $\wP_C$ requires
$O(2^{\frac{3}{2}k})$.
\end{fact}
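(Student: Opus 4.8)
The plan is to obtain this as a direct corollary of Theorem~\ref{thmWPcomplexity}, by exactly the computation carried out in case~2 of Fact~\ref{thmComparisonVSBruteForceLinCase}, the only change being that the hypothesis ``random linear defining polynomials'' there is replaced by the present hypothesis ``defining polynomials with on average $k/2$ nonzero ANF coefficients''. By Theorem~\ref{thmWPcomplexity}, if the defining polynomials $\fnnf_1,\dots,\fnnf_n$ of $C$ have on average $2^k/h$ nonzero NNF coefficients, then computing the weight distribution of $C$ --- equivalently, the evaluation vector of $\wP_C$, hence its minimum --- costs at most $(n/h+k)2^k$. Since by assumption $n\sim 2^k$, the quantity $n/h$ is, up to a constant, the average number of nonzero NNF monomials of a defining polynomial, so the whole estimate reduces to showing that a defining polynomial of $C$ has on average $O(2^{k/2})$ nonzero NNF coefficients.

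To this end I would apply Proposition~\ref{thmNNFNumOfCoeff}: a Boolean function in $k$ variables whose ANF has $r$ nonzero coefficients has at most $\min\{2^k,\,2^r-1\}$ nonzero coefficients in NNF. By hypothesis a defining polynomial has $r\sim k/2$ nonzero ANF coefficients, hence at most $2^{k/2}-1$ nonzero NNF coefficients, so $n/h\sim 2^{k/2}$ and, substituting into the bound above, $(n/h+k)2^k=(2^{k/2}+k)2^k=O(2^{\frac{3}{2}k})$, which is the claim. Here Proposition~\ref{thmNNFNumOfCoeff} plays the role that the exact value $2^r-1$ from Proposition~\ref{thmNNFofLinBF} played in the linear case.

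The one point that needs care rather than a one-line appeal is the step from ``on average $r=k/2$ ANF monomials'' to ``on average at most $2^{k/2}-1$ NNF monomials''. Since $r\mapsto 2^{r}-1$ is convex, the average of $2^{r_i}-1$ over the $n$ defining polynomials is in general \emph{larger} than $2^{\,(1/n)\sum_i r_i}-1=2^{k/2}-1$; so the conclusion really requires the individual ANF sizes $r_i$ to be essentially bounded by $k/2$, i.e. concentrated about their mean, which is the natural reading of the hypothesis --- each defining polynomial being a sparse ANF with about $k/2$ monomials. Under that reading the bound $2^{r_i}-1\le 2^{k/2}-1$ holds termwise and the estimate above is rigorous up to constants. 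Finally, to keep the general case honest, I would note that the truncation $\min\{2^k,2^r-1\}$ in Proposition~\ref{thmNNFNumOfCoeff} caps the NNF monomial count at $2^k$ unconditionally, so even in the worst case the cost is only $O(n2^k)=O(2^{2k})$ --- never worse than brute force.
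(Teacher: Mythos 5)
Your proposal matches the paper's own argument: the paper likewise reduces to the linear case computation of Fact~\ref{thmComparisonVSBruteForceLinCase}, replacing the exact count $2^{r}-1$ of Proposition~\ref{thmNNFofLinBF} with the upper bound of Proposition~\ref{thmNNFNumOfCoeff} to conclude the NNF has fewer than $2^{k/2}-1$ nonzero coefficients, hence the same $O(2^{\frac{3}{2}k})$ bound. Your observation that passing from ``on average $k/2$ ANF monomials'' to ``on average at most $2^{k/2}-1$ NNF monomials'' needs the individual ANF sizes to be concentrated near $k/2$ (because $r\mapsto 2^{r}-1$ is convex) is a legitimate subtlety that the paper's one-line proof silently glosses over, and your resolution of it is the right one.
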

\begin{proof}
 The arguments are the same as in the proof of Fact \ref{thmComparisonVSBruteForceLinCase}, except that this time the nonzero coefficients of the NNF are less than $2^{k/2}-1$. 
 This implies that in practice the overall complexity in this case is even lower, as shown in Table \ref{tabLinVSNonlinANF}.
\end{proof}
In Table \ref{tabLinVSNonlinANF} we show the coefficient of growth of the complexity of our method in three different cases. 
The first line shows the coefficient of growth of the brute force method applied to a linear code. The second line shows the coefficient of growth of our method applied to a linear code. In the third line our method is applied to a nonlinear code whose ANF representation is sparse, and in the last line nonlinear codes with dense ANF representation are considered.\\
For the comparison we choose for each $k$, $10$ random $(2^k,2^k)$-codes and $10$ random $(2^{k+1},2^{k+1})$-codes and compute the average times $t_1,t_2$ to compute the minimum weight in each case. 
Then we report the number $\log_2(t_1/t_2)$.\\
We can see, as expected, that our method performs best in the case of sparse nonlinear ANF.
\begin{table}[ht]
\begin{center}
\begin{tabular}{lcccc}
$k$                  & $8-9$ & $9-10$ & $10-11$ & $11-12$ \\
\hline
Brute-force Linear ANF & 1.93 & 1.98 & 2.00 & 1.99 \\ 
Linear ANF           &  1.32 &  1.38  &  1.53   &  1.61   \\
Sparse Nonlinear ANF &  0.89 &  1.12  &  1.32   &  1.38   \\
Dense  Nonlinear ANF &  2.09 &  2.03  &  2.04   &  2.08
\end{tabular}
\end{center}
\caption{Coefficients of growth of our method compared with brute force.}
\label{tabLinVSNonlinANF}
\end{table}
\subsection{Comparison with Brouwer-Zimmerman method for linear codes}
In the linear case the defining polynomials of a code $C$ clearly have a sparse ANF.
If a defining polynomial in $\FF[x_1,\ldots,x_k]$ is linear and with less than $k$ variables, than many coefficients of the NNF are $0$, precisely, the coefficients of the monomials containing the missing variable in the ANF. 
In this case the computation of the weight distribution of $C$ (and thus of the distance of $C$, since the code is linear) is faster than brute force.\\
In Table \ref{tabBrouZimm} we compare the time $t_1$ needed to compute the minimum weight $w$ of a linear code given as list of codewords with the MAGMA command
\begin{center}
\texttt{MinimumWeight(C:Method:=``Zimmerman'')},
\end{center}
with the time $t_2$ needed to compute $w$ when the code is given as a list of \Bf's in NNF using our method. The comparison has been done for 10 random linear codes fixing a pair $(k,n)$, with $n \gg k$. In the column $w_{av}$ the average minimum weight found is shown.\\
An AMD E2-1800 APU processor with $850$ MHz has been used for the computations.
\begin{table}[ht]
\begin{center}
\begin{tabular}{llllll}
$k$ & $n$ & $t_1$ & $t_2$ & $t_1/t_2$ & $w_{av}$\\
\hline
8  & $100k=800$  & $0.043$ & $0.007$ & $6.143$ & $360.1$ \\
8  & $150k=1200$ & $0.122$ & $0.012$ & $10.17$ & $554.1$ \\
8  & $200k=1600$ & $0.122$ & $0.015$ & $8.13$  & $745.2$ \\
8  & $250k=2000$ & $0.171$ & $0.011$ & $15.55$ & $935.0$ \\
\hline
9  & $100k=900$  & $0.833$ & $0.019$ & $4.368$ & $403.1$ \\
9  & $150k=1350$ & $0.116$ & $0.020$ & $5.800$ & $615.6$ \\
9  & $200k=1800$ & $0.277$ & $0.024$ & $11.54$ & $834.0$ \\
9  & $250k=2250$ & $0.256$ & $0.029$ & $8.828$ & $1050.0$ \\
\hline
10 & $100k=1000$ & $0.050$ & $0.031$ & $1.613$ & $448.3$ \\
10 & $150k=1500$ & $0.136$ & $0.041$ & $3.317$ & $687.5$ \\
10 & $200k=2000$ & $0.178$ & $0.050$ & $3.560$ & $922.7$ \\
10 & $250k=2500$ & $0.185$ & $0.056$ & $3.304$ & $1168.3$
\end{tabular}
\end{center}
\caption{Comparison with Brouwer-Zimmerman method.}
\label{tabBrouZimm}
\end{table}
We can see that there are cases, i.e. $(k,n) = (8,1200)$ or $(k,n) = (9,1800)$, where our method is 10 times faster than the Brouwer-Zimmerman method. This is not surprising, since the it is known that there are cases where brute force performs better than the Brouwer-Zimmerman method.\\
We also recall that Brouwer-Zimmerman method is probabilistic, while our method is deterministic.
\section{Binary codes whose cardinality is not a power of 2}
Algorithm \ref{algWeightPolyEval} can be modified to work also with binary codes whose cardinality is not a power of 2. 
We only mention two techniques that can be used.\\
A first method consist in expanding the code until it reaches a size of $2^k$.
The key observation is that the minimum weight vector of a list of vectors in $(\FF)^n$ (i.e. the codewords of $C$) is equal to the minimum weight vector of the same list concatenated to the list of some repeated words of $C$ (eventhough this new list is not a code anymore).\\
A second approach is to divide the code $C$ in subcodes whose cardinality is a power of 2. Then to each of these codes we can apply Algorithm \ref{algWeightPolyEval} and then take the minimum of all the results.
See \cite{CGC-cd-phdthesis-bellini} for details.
\bibliographystyle{amsalpha}
\bibliography{RefsCGC}

\end{document}